\newcommand{\fixed@sra}{$\vrule height 2\fontdimen22\textfont2 width 0pt{\mathbf\shortrightarrow}$}
\newcommand{\shortarrow}[1]{%
  \mathrel{\text{\rotatebox[origin=c]{\numexpr#1*45}{\fixed@sra}}}
}
\title{An $\Omega(n^2)$ Lower Bound for Random Universal Sets for Planar Graphs} 
\titlerunning{Random Universal Sets Lower Bound}
\author{Alexander Choi}{Department of Computer Science, 
						University of California, 
						Riverside}
						{achoi035@ucr.edu}{[orcid]}{Research supported by NSF grant CCF-1536026.}
\author{Marek Chrobak}{Department of Computer Science, 
						University of California, 
						Riverside}
						{marek@cs.ucr.edu}{[orcid]}{Research supported by NSF grant CCF-1536026.}
\author{Kevin Costello}{Department of Mathematics, 
						University of California, 
						Riverside}
						{costello@math.ucr.edu}{[orcid]}{}
\authorrunning{A. Choi, M. Chrobak, and K. Costello}
\keywords{graph theory, planar graphs, universal sets}  
\newcommand{\mareksmargincomment}[1]%
    {{%
      \marginpar{{\tiny\begin{minipage}{0.5in}
                       \begin{flushleft}
                          {\color{red}MCh} {#1}
                       \end{flushleft}
                       \end{minipage}
                }}
    }}
\newcommand{\ignore}[1]{}
\newcommand{\myparagraph}[1]{{\smallskip\noindent{\bf #1}}}
\newcommand{\etal}{{\em et~al.}}
\newcommand{\calL}{{\cal L}}
\newcommand{\braced}[1]{{ \left\{ #1 \right\} }}
\newcommand{\floor}[1]{{ \lfloor #1 \rfloor }}
\newcommand{\Prob}{{\mathbb{P}}}
\newcommand{\reals}{{\mathbb R}}
\newcommand{\permut}{{\textit{perm}}}
\begin{document}

\maketitle

\begin{abstract}
A set $U\subseteq \reals^2$ is $n$-universal if all $n$-vertex planar graphs 
have a planar straight-line embedding into $U$. We prove 
that if $Q \subseteq \reals^2$ consists of
points chosen randomly and uniformly from the unit square then $Q$ must have
cardinality  $\Omega(n^2)$ in order to be $n$-universal with high probability. 
This shows that the probabilistic method, at least in its basic form,
cannot be used to establish an $o(n^2)$ upper bound on universal sets.
\end{abstract}


\section{Introduction}
\label{sec: introduction}




\myparagraph{Planar universal sets.}
Let $n\ge 4$. A set $U$ of points in $\reals^2$ is called \emph{$n$-universal} if each planar graph $G$ with $n$ vertices 
has an embedding into $\reals^2$ that maps one-to-one the vertices of $G$ into points of $U$ and 
maps each edge of $G$ into a
straight-line segment connecting its endpoints, in such a way that these segments do not intersect (except of course
at the endpoints). It is easy to see that for $U$ to be universal, it is sufficient that
the required embedding exists only for \emph{maximal} planar graphs, namely $n$-vertex planar graphs
with $3n-6$ edges. Maximal planar graphs
have the property that in any planar embedding all their faces are $3$-cycles
(for this reason, they are also sometimes called \emph{triangulated}) and that adding any edge
to such a graph destroys the planarity property.


\myparagraph{Past work.}
The main goal of research on universal sets is to construct such sets of small cardinality.
That finite universal sets exist is trivial as, according to F{\'a}ry's theorem~\cite{fary_on-straight-line_1948},
each planar graph has a straight-line embedding in $\reals^2$, so
we can simply embed each $n$-vertex planar graph into a separate set of points. 
For each $n=4,5$ there is only one (non-isomorphic) maximal planar graph, so $n$ points
are sufficient. This was extended by
Cardinal~{\etal}~\cite{cardinal_etal_on-universal_2013}, who showed the existence of $n$-universal sets
of cardinality $n$ for $n \le 10$, as well as their non-existence for $n\ge 15$. 
As shown recently by Scheucher~{\etal}~\cite{scheucher_etal_a-note-on-universal_2018}
with a computer-assisted proof, for $n=11$ at least $12$ points are necessary in a $11$-universal set. 
This leaves the question of
existence of $n$-universal sets of size $n$ open only for $n = 12, 13, 14$.                                                

For arbitrary values of $n$,
various algorithmic upper bounds for $n$-universal sets of size $O(n^2)$ have been described that make use of points on an integer lattice~\cite{bannister_etal_superpatterns_2014,brandenburg_drawing-planar-graphs_2008,chrobak_payne_a-linear-time-algorithm_1995,defraysseix_pach_pollack_how-to-draw_1990,schnyder_embedding-planar-graphs_1990}. The best current upper bound of $\frac{1}{4}n^2 + O(n)$
was given by Bannister~{\etal}~\cite{bannister_etal_superpatterns_2014}.
The technique in~\cite{bannister_etal_superpatterns_2014} involves reducing the problem to a combinatorial question about superpatterns of integer permutations.  

Very little is known about lower bounds for universal sets. Following an earlier sequence of papers~\cite{defraysseix_pach_pollack_how-to-draw_1990,chrobak_karloff_a-lower-bound_1989,kurowski_a-1.234-louwer-bound_2004},
recently Scheucher~{\etal}~\cite{scheucher_etal_a-note-on-universal_2018} proved that $(1.293-o(1))n$
points are required for a set to be universal. 

There is also some research on constructing universal sets for some sub-classes of planar graphs. For
example, Bannister \etal~\cite{bannister_etal_superpatterns_2014}
describe a tight asymptotic bound of $\Theta(n\log n)$ 
for the size of $n$-universal sets for a specific type of ``simply-nested'' graphs.

Summarizing, in spite of 30 years of research, the gap between the lower and upper bounds
for the size of $n$-universal sets is still very large, between linear and quadratic in $n$.  


\myparagraph{Our contribution.} 
The probabilistic method is a powerful tool for proving existence of various combinatorial structures with desired properties. 
In its standard form, it works by establishing a probability distribution on these structures and showing that they have non-zero
probability (typically, in fact, large) of having the required property.
We address here the question whether this approach can work for showing existence of 
smaller $n$-universal sets. Our result is, unfortunately, negative; namely we give an
$\Omega(n^2)$ lower bound for universal sets constructed in this way, as summarized by the following theorem. 


\begin{theorem}\label{thm: main theorem}
Let  $Q \subseteq \reals^2$ be a set of $m$ random points chosen uniformly from the unit square.
If $m \le \left(\frac{n}{48 e}\right)^2$ then with probability at least $1- 8 {\cdot} 4^{-n/12}$ set
$Q$ is not $n$-universal. 
\end{theorem}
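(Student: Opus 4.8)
The plan is to exhibit, for the given $n$, a family of $n$-vertex planar graphs that cannot \emph{all} be straight-line embedded into $Q$; since, as noted above, a point set is $n$-universal as soon as it accommodates every maximal planar graph, and since maximal planar graphs on at least four vertices are $3$-connected and hence have a combinatorial embedding that is unique up to reflection and choice of outer face, this will suffice. The family I would use is the one underlying the superpattern-based $\frac14 n^2+O(n)$ upper bound of Bannister~{\etal}~\cite{bannister_etal_superpatterns_2014}. Fix a parameter $\ell=\Theta(n)$; for each permutation $\sigma$ of $[\ell]$, build a maximal planar graph $G_\sigma$ on exactly $n$ vertices consisting of (a)~a fixed rigid ``scaffold'' --- a long triangulated ladder/prism whose straight-line drawings pin down a coordinate frame up to the $8$ symmetries of the square --- (b)~$\ell$ ``carrier'' vertices attached to the scaffold so as to encode $\sigma$, and (c)~dummy vertices needed to reach the exact count $n$. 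The design requirement is that every straight-line planar embedding of $G_\sigma$ into a point set $P$ forces $P$ to contain $\ell$ points whose coordinatewise order type realizes $\sigma$ (sorting those $\ell$ points by $x$-coordinate, their $y$-order is $\sigma$), up to the $8$ dihedral images of $\sigma$.

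Granting this reduction, the probabilistic part is short. Because the points of $Q$ are i.i.d.\ uniform in the unit square, almost surely all $2m$ coordinates are distinct, and since the $x$- and $y$-orders are independent, sorting $Q$ by $x$-coordinate and reading off $y$-ranks produces a \emph{uniformly random} permutation $\pi_Q\in S_m$. If $Q$ were $n$-universal then every $G_\sigma$ would embed into $Q$, so $\pi_Q$ would contain, for every $\sigma\in S_\ell$, at least one of the $8$ dihedral images of $\sigma$ as a pattern. Fixing one particular $\sigma_0\in S_\ell$,
\[
\Prob[Q\text{ is }n\text{-universal}]\ \le\ \Prob[\pi_Q\text{ contains some dihedral image of }\sigma_0]\ \le\ 8\,\Prob[\pi_Q\text{ contains }\sigma_0].
\]
A uniformly random $\ell$-element set of positions of $\pi_Q$ induces a uniformly random length-$\ell$ pattern, so the expected number of occurrences of $\sigma_0$ in $\pi_Q$ is exactly $\binom{m}{\ell}/\ell!$, and Markov's inequality gives
\[
\Prob[\pi_Q\text{ contains }\sigma_0]\ \le\ \binom{m}{\ell}/\ell!\ \le\ m^{\ell}/(\ell!)^2\ \le\ (e^2 m/\ell^2)^{\ell}=(e\sqrt{m}/\ell)^{2\ell}.
\]
With $\ell$ proportional to $n$, the hypothesis $m\le(n/48e)^2$ pushes $e\sqrt m/\ell$ below a constant strictly less than $1$, and a direct computation shows that, for a suitable choice of the constant in $\ell=\Theta(n)$, the resulting bound on $\Prob[Q\text{ is }n\text{-universal}]$ is at most $8\cdot 4^{-n/12}$, exactly the complement of the claimed probability.

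I expect the reduction, not the estimate, to be the main obstacle. The principal tasks would be: (i)~designing the scaffold so that its straight-line drawings genuinely determine a coordinate frame up to the dihedral group --- $3$-connectivity (uniqueness of the combinatorial embedding) is the lever, but one must still exclude ``collapsed'' or affinely-skewed drawings that realize $\sigma$ in an unintended frame; (ii)~attaching the carriers so that their positions are \emph{forced} to realize $\sigma$, not merely permitted to; and (iii)~keeping $|V(G_\sigma)|$ linear in $\ell$, since the bound is quadratic precisely because $\ell=\Theta(n)$, so any super-linear overhead would degrade the theorem. One must also take care with the logical direction: what is needed is ``$Q$ is $n$-universal $\Rightarrow$ $\pi_Q$ is, up to dihedral symmetry, a superpattern of $S_\ell$'', so the crux is proving that $G_\sigma$ has \emph{no} embedding into $Q$ whenever $\pi_Q$ avoids all eight dihedral images of $\sigma$.
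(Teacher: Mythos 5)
Your probabilistic estimate is fine and is essentially the one the paper uses (a first moment bound over $\ell$-element position sets, $\binom{m}{\ell}/\ell!\le (e^2m/\ell^2)^{\ell}$, with $\ell=\lfloor n/12\rfloor\ge 2e\sqrt{m}$ giving $4^{-\ell}$). The gap is in the reduction, and it is not merely unfinished engineering: the graphs $G_\sigma$ you posit cannot exist for non-monotone $\sigma$. Straight-line planarity is invariant under invertible affine maps of the plane, but pattern containment is not. Concretely, take any straight-line drawing of $G_\sigma$, let $P_0$ be its vertex set, and apply the shear $A_c(x,y)=(x+cy,\,y)$ for large $c$. The composed drawing is still a planar straight-line embedding of $G_\sigma$ into $A_c(P_0)$, but once $c$ exceeds the ratio of the largest $x$-gap to the smallest $y$-gap of $P_0$, the $x$-order of $A_c(P_0)$ coincides with its $y$-order, i.e.\ $\permut(A_c(P_0))$ is the identity permutation, whose only length-$\ell$ patterns are increasing. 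So $A_c(P_0)$ contains no dihedral image of any non-monotone $\sigma$, yet $G_\sigma$ embeds into it --- contradicting the forcing property your reduction requires. No ``rigid scaffold'' can pin down a coordinate frame up to the eight symmetries of the square, because the relevant symmetry group of straight-line planar embeddability is the affine group, not the dihedral group.

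The only patterns immune to this objection are the monotone ones, and that is exactly what the paper exploits. Its single gadget is a stack of $n/3$ triangles joined into a $3$-connected triangulation; by Whitney's theorem the combinatorial embedding is unique, so in any drawing at least half of the triangles are nested; nestedness of triangles forces nestedness of their axis-parallel bounding boxes; each bounding box shares a corner with its triangle (hence with $U$); and the corners of nested boxes are automatically in monotone position (bottom-left and top-right corners form an increasing set, top-left and bottom-right a decreasing one). This yields a monotone subset of $U$ of size $n/12$, after which your Markov/union-bound computation --- with $\sigma_0$ the monotone pattern, so a factor $2$ rather than $8$ --- closes the proof. To rescue your write-up, replace the family $\{G_\sigma\}$ by this one graph and the event ``$\pi_Q$ contains $\sigma_0$'' by ``$\pi_Q$ contains a monotone subsequence of length $\lfloor n/12\rfloor$''.
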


The constants in Theorem~\ref{thm: main theorem} are not optimized; they 
were chosen with the simplicity of the proof in mind.

The idea behind the proof of Theorem~\ref{thm: main theorem}  is to reduce the problem to the 
well-studied problem of estimating longest increasing sub-sequences in random permutations.
We present this proof in the next section. 

We stress that Theorem~\ref{thm: main theorem} applies only to the probabilistic method in its
basic form. It does not preclude the possibility that, for some $m = o(n^2)$,
there exist some other probability distribution on the unit square for which the probability of generating a 
universal set is large, or even that the uniform distribution has non-zero probability of 
generating such a set.


\section{Lower Bound Proof}
\label{sec: lower bound proof}



We now prove  Theorem~\ref{thm: main theorem}, starting with a
lemma showing that a universal set must contain
a large monotone subset, as defined below.

Let $U = \braced{ (x_1,y_1), (x_2,y_2), ...,(x_m,y_m) } \subseteq \reals^2$ be a set of
$m$ points in the plane,
where $x_1 < x_2 < ... < x_m$ and $y_i\neq y_j$ for all $i\neq j$.
Consider a subset  $S$ of $U$, say
$S = \braced{ ( x_{j_1} , y_{j_1} ), (x_{j_2} , y_{j_2} ), ..., ( x_{j_\ell} , y_{j_\ell} )}$,
with ${j_1} < {j_2} < ... < {j_\ell}$.
We say that $S$ is \emph{increasing} iff
$y_{j_1} < y_{j_2} < ... < y_{j_\ell}$ and we say that it's  \emph{decreasing}
iff   $y_{j_1} > y_{j_2} > ... > y_{j_\ell}$.
If $S$ is either increasing or decreasing, we call it \emph{monotone}.


\begin{lemma}\label{lem: monotone subsets}
If $U$ is an $n$-universal set then $U$ has a monotone subset $S$ of
cardinality $\floor{n/12}$.
\end{lemma}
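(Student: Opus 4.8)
The plan is to exhibit a single planar graph $G$ on $n$ vertices whose every straight-line embedding forces a large monotone subset of the image points, and then conclude that $U$, being $n$-universal, must contain such a subset. The natural candidate is a graph built from a long path (or a "fan"/"caterpillar"-like triangulated structure) whose combinatorial structure is rigid enough that, in any planar straight-line drawing, a linear-sized subset of the vertices must appear in convex position, or more precisely must be $x$-monotone along the drawing. The key classical fact I would invoke is the Erd\H{o}s--Szekeres theorem: any sequence of more than $(k-1)^2$ reals contains a monotone subsequence of length $k$. Applied to the $y$-coordinates of the images of a well-chosen set of $N$ vertices that are already sorted by $x$-coordinate, this yields a monotone subset of size $\sqrt{N}$. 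To get $\floor{n/12}$, I would want roughly $N \approx n^2/144$ vertices in convex-or-monotone position — but $G$ only has $n$ vertices, so instead the argument must run the other way: I should pick a graph where the images of $\Theta(n)$ specific vertices are forced to be $x$-monotone (not just convex) in the drawing, and then those $\Theta(n)$ points, sorted by $x$, already have distinct $y$-coordinates, so Erd\H{o}s--Szekeres on a set of size $cn$ gives a monotone subsequence of size $\sqrt{cn}$ — which is not linear.

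**So the real mechanism must be different:** the constant $12$ and the linear bound $\floor{n/12}$ strongly suggest that the graph $G$ is chosen so that a linear number of its vertices are forced to lie on a \emph{convex arc} in the drawing, with a prescribed left-to-right order, and that convex position directly gives monotonicity only up to a split into two monotone pieces. Concretely, I expect the construction to be a \emph{nested sequence of triangles} or a "stacked" triangulation: take $k = \floor{n/3}$ or so nested triangles $T_1 \supset T_2 \supset \cdots$, or rather an outerplanar-type graph in which $\Theta(n)$ vertices must appear on the outer boundary of the drawing. The outer face of any planar straight-line embedding is a convex-ish polygon, so its vertices, read in the cyclic boundary order, split into at most two $x$-monotone chains; one of them has at least half the boundary vertices. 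If the graph forces $t$ vertices onto the outer face in a \emph{fixed cyclic order}, we get an $x$-monotone set of size $t/2$ with a prescribed order, and after possibly reversing we get an increasing or decreasing subset — but we still need the $y$-order to be forced, which outer-face membership does give once we know the $x$-order, \emph{provided} the vertices are in convex position along that face. Hence $t \approx n/6$ boundary vertices in convex position yields a monotone subset of size $n/12$, which matches the statement exactly.

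**The key steps, in order, would be:** (1) Define the hard graph $G$ — I would use (roughly) $\floor{n/12}$ vertices $v_1,\dots,v_s$ together with auxiliary vertices (at most a constant factor more, hence the $12$) arranged so that in \emph{every} planar embedding, $v_1,\dots,v_s$ lie on a common face and appear in one of the two cyclic orders $v_1,\dots,v_s$ or its reverse; the standard trick is to attach, between consecutive $v_i$'s and "outside," gadgets (subdivided edges / extra triangles / a large wheel) that are too big to fit unless those vertices bound a common face. (2) Invoke $n$-universality: there is a planar straight-line embedding $\phi$ of $G$ into $U$. (3) Observe that the images $\phi(v_1),\dots,\phi(v_s)$, lying on one face of a straight-line drawing, are the vertices of a convex polygon (a bounded face of a straight-line planar drawing need not be convex in general, so here I would instead force the $v_i$ onto the \emph{outer} face, whose drawn boundary is always a convex polygon after taking the convex hull — more carefully, any simple polygon's vertex sequence, when the polygon is the outer boundary, decomposes into a constant number of $x$-monotone arcs, but to guarantee \emph{two} I genuinely want convexity, so the gadget should additionally force convex position, e.g. by making each $v_i$ a degree-2 vertex whose two neighbors pull it to an extreme point). (4) Split the convex cyclic sequence $\phi(v_1),\dots,\phi(v_s)$ at its leftmost and rightmost points into an upper and a lower $x$-monotone chain; one has $\ge s/2 \ge \floor{n/12}$ points. (5) That chain, having distinct $x$ and distinct $y$ coordinates and being $x$-monotone with a consistent $y$-trend, is exactly an increasing or decreasing subset of $U$ in the sense defined — done.

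**The main obstacle** is step (1) together with step (3): designing the gadget that \emph{simultaneously} forces $\Theta(n)$ designated vertices onto the outer face \emph{and} into convex position in the drawing, using only $O(n)$ total vertices and with the explicit constant working out to $12$. Forcing vertices onto the outer face (or a common face) with constant overhead is standard — e.g. make the designated vertices the "rim" of a triangulated wheel-like graph, or use the fact that in a maximal planar graph the outer triangle can be prescribed — but forcing \emph{convex position} of many vertices via a finite graph gadget is the delicate point, since planar straight-line drawings are otherwise very floppy; I would expect the actual proof to sidestep convexity by working with the outer face directly and arguing that its drawn boundary polygon, traversed in boundary order, can be covered by two $x$-monotone arcs after splitting at the $x$-extreme vertices (true for any simple polygon: going around, $x$ is unimodal-ish only if convex, so in general one needs $\ge 2$ but possibly more arcs — this is the gap I would have to close, perhaps by choosing $G$ so the outer boundary is a triangle containing a path, or by a more clever counting that still loses only a constant factor, absorbed into the $12$).
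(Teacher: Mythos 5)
There is a genuine gap, and you have correctly located it yourself: the entire argument hinges on step (1)/(3), a gadget that forces $\Theta(n)$ designated vertices into \emph{convex position} (or onto an outer face whose drawn boundary splits into only two $x$-monotone chains) in \emph{every} straight-line embedding. No such gadget exists. Straight-line planar drawings are, as you say, floppy: the boundary of any face with more than three vertices can be drawn as an arbitrarily non-convex simple polygon (e.g.\ a comb-shaped polygon whose boundary order decomposes into $\Theta(n)$ monotone arcs, not two), and your proposed fixes --- degree-2 vertices ``pulled to extreme points,'' or large wheels --- do not constrain the geometry at all, only the topology. Moreover, if you insist on a maximal planar graph every face is a triangle, so there is no large face to exploit in the first place. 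The Erd\H{o}s--Szekeres route, which you correctly discard, only yields $\Theta(\sqrt{n})$. So the proposal as written cannot be completed.

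The paper's proof avoids forcing convexity entirely. It takes $G_0$ to be a chain of $n/3$ nested $3$-cycles joined so that $G_0$ is $3$-connected; by Whitney's theorem the embedding is topologically unique, so in any straight-line drawing at least half of the cycles appear as geometrically nested triangles $T_1 \supset T_2 \supset \cdots \supset T_k$ with $k = n/6$. The key observation is purely about axis-parallel bounding boxes: the boxes $B_i$ of the $T_i$ are nested, and each $B_i$ has at least one corner that coincides with a corner of $T_i$ (each of the four sides of $B_i$ touches one of the three triangle corners, so by pigeonhole some triangle corner is touched by two sides), hence lies in $U$. Classifying these $\ge k$ corner points by type (bottom-left/top-right versus top-left/bottom-right), the nesting of the boxes makes one class increasing and the other decreasing, and the larger class has size $\ge k/2 = n/12$. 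This is the ingredient your outline is missing: a way to extract forced monotonicity from nesting alone, with no convexity assumption on the drawn triangles. Your intuition that nested triangles are the right hard instance was correct; the bounding-box pigeonhole is the step you would need to find to close the gap.
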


\begin{proof}
Without loss of generality we can assume that $n$ is a multiple of $12$, for
otherwise we can apply the argument below to $n' = n - (n\bmod 12)$.

Let $k = n/6$, and let $G_0$ be a maximal planar graph that consists of a sequence of
$2k = n/3$ 3-cycles $C_1, C_2, ..., C_{2k}$
with each consecutive pair of 3-cycles connected by $6$ edges in such a way that for $1 \leq j \leq 2k-1$ the graph between $C_k$ and $C_{k+1}$ is $2$-regular.  This graph is $3$-connected (removing two vertices can only destroy $4$ edges between consecutive layers), so it follows from a result of Whitney \cite{Whitney} that $G_0$ has a topologically unique embedding in the plane up to choice of the external face.  

Since $U$ is universal, $G_0$ has a planar straight-line embedding into $U$.
In this embedding, no matter what face of $G_0$ is selected as the external face,
in the sequence $C_1, C_2, ..., C_{2k}$, either the first $k$ or the last $k$
will be embedded into nested triangles in the plane.
Denote these nested triangles by $T_1,T_2,...,T_k$, listed in order of decreasing area.
In other words, each $T_i$ is inside $T_{i-1}$, for $i > 1$. By definition,
the corner points of each $T_i$ are in $U$.
For each $i$, let $B_i$ denote the bounding box (an axis-parallel rectangle) of $T_i$.
By straighforward geometry,
these bounding boxes are also nested, that is each  $B_i$ is inside $B_{i-1}$, for $i > 1$.
(See Figure~\ref{fig: triangles and boxes}.)

\begin{figure}[ht]
\begin{center}
\includegraphics[width = 3.75in]{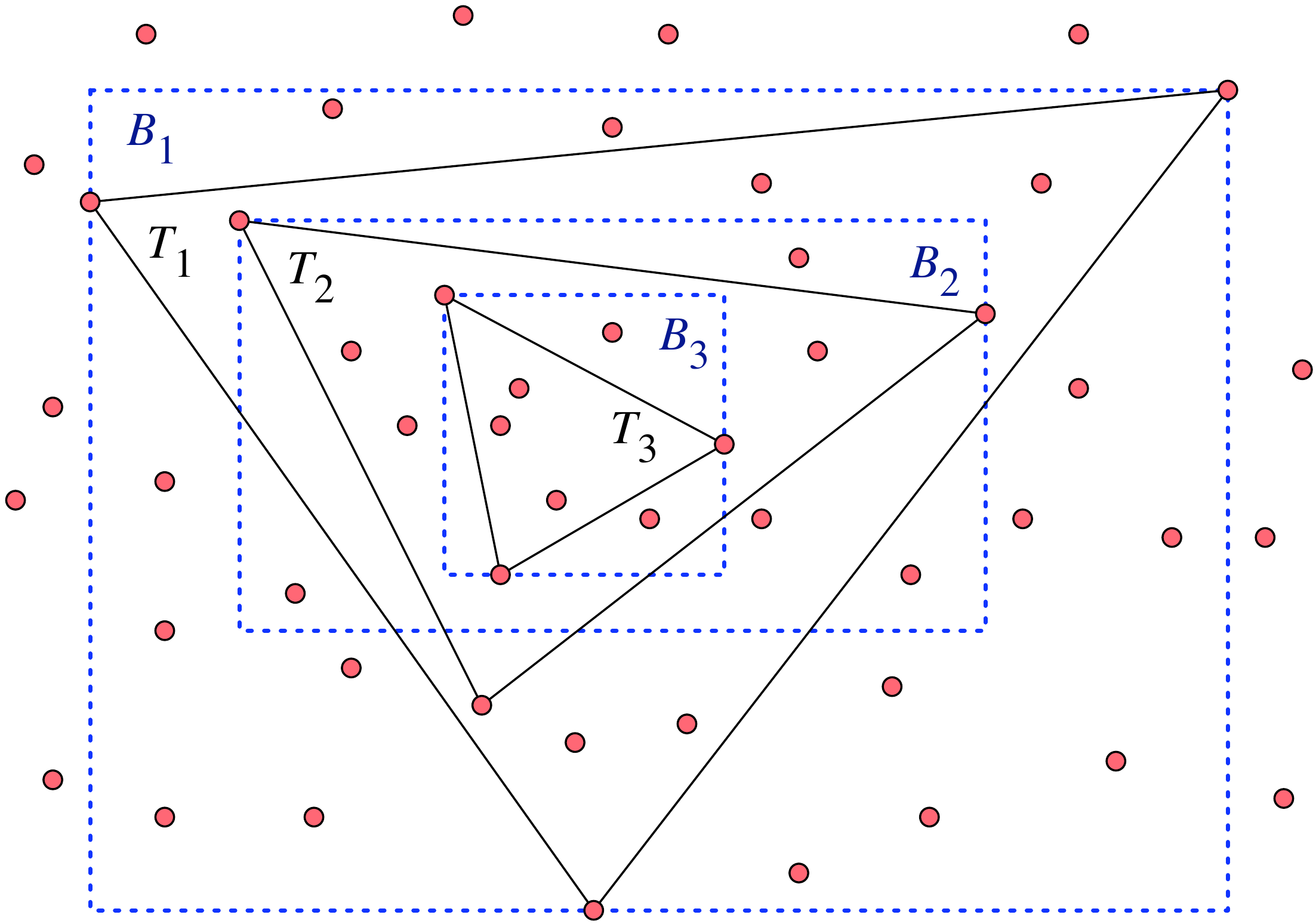}
\caption{Triangles $T_1$, $T_2$, $T_3$ and their bounding boxes.}
\label{fig: triangles and boxes}
\end{center}
\end{figure}

We claim that,
for each $i$, at least one corner of $B_i$ is also a corner of $T_i$, and thus also belongs to $U$.
This is straighforward: each of the four (axis-parallel) sides of $B_i$ must touch one of the three corners of $T_i$.
So there must be a corner of $T_i$ that is touched by two sides of $B_i$. This corner of $T_i$ is then also
a corner of $B_i$, proving our claim.

Let $S'$ be the set of the corners of the bounding boxes $B_i$ that are in $U$. By the claim in
the above paragraph, we have $|S'|\ge k$.  Partition $S'$ into two subsets $S_1$ and $S_2$,
where $S_1$ is the set of points in $S'$ that are either
bottom-left corners or top-right corners of the bounding boxes,
and $S_2$ is the set of points in $S'$ that are either top-left corners or bottom-right corners.
Since the bounding boxes $B_i$ are nested, we obtain that $S_1$
is increasing and $S_2$ is decreasing.

To complete the proof, take $S$ to be either $S_1$ or $S_2$, whichever set is larger,
breaking the tie arbitrarily. Then $S$ is monotone. Furthermore,
since $|S'| \ge k$, $S_1\cup S_2 = S'$, and $S_1$ and $S_2$ are disjoint, we
conclude that $|S|\ge k/2 = n/12$, completing the proof.
\end{proof}


Let $U$ and $S$ be as defined before the statement of Lemma~\ref{lem: monotone subsets}.
With $U$ we can associate a permutation $\pi$ of $\braced{1,2,...,m}$ determined by
having $\pi(i)$ be the rank of $y_i$ in the set of all y-coordinates of $U$, that is
\begin{equation*}
y_{\pi^{-1}(1)} < y_{\pi^{-1}(2)} < ... < y_{\pi^{-1}(m)}.
\end{equation*}
We denote this permutation $\pi$ by $\permut(U)$. Then $S$ naturally \emph{induces} a
subsequence $\pi(j_1)\pi(j_2) ... \pi(j_\ell)$ of $\permut(U)$.
Obviously, $S$ is increasing (resp. decreasing or monotone)
iff its induced subsequence of $\permut(U)$ is increasing (resp.  decreasing or monotone).

\smallskip
We are now ready to prove Theorem~\ref{thm: main theorem}.
Suppose now that $Q$ is a random set of $m$ points chosen uniformly from the unit
square. The probability that any two points have an equal coordinate is $0$, and
thus this event can be simply neglected.
Further, this distribution on sets $Q$ induces a uniform distribution on
the associated permutations $\permut(Q)$ of $\braced{1,2,...,m}$.
Therefore, by Lemma~\ref{lem: monotone subsets}, to prove Theorem~\ref{thm: main theorem}
it is sufficient to show the following claim:


\begin{claim}\label{cla: monotone permutations}
If $\pi$ is a random permutation of $\braced{1,2,...,m}$
and $m\le (\frac{n}{48 e})^2$, then the probability
that $\pi$ contains a monotone subsequence of length $\floor{n/12}$ is at most $8 \cdot 4^{-n/12}$.
\end{claim}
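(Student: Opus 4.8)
The plan is a straightforward first-moment (union-bound) argument over all candidate monotone subsequences. Fix $\ell = \floor{n/12}$. For a uniformly random permutation $\pi$ of $\braced{1,\dots,m}$ and a fixed set of positions $j_1 < j_2 < \dots < j_\ell$, the tuple $(\pi(j_1),\dots,\pi(j_\ell))$ is a uniformly random ordering of a uniformly random $\ell$-element subset of $\braced{1,\dots,m}$; hence the probability it is increasing is exactly $1/\ell!$, and likewise the probability it is decreasing is $1/\ell!$. Summing over the $\binom{m}{\ell}$ choices of positions and the two directions, the probability that $\pi$ has a monotone subsequence of length $\ell$ is at most $2\binom{m}{\ell}/\ell!$. (This already settles the claim when $m < \ell$, since then $\binom{m}{\ell} = 0$; and when $n < 12$ the claimed bound $8 \cdot 4^{-n/12}$ exceeds $1$. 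So from now on we may assume $m \ge \ell \ge 1$.)

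Next I would crudely simplify $2\binom{m}{\ell}/\ell!$ using $\binom{m}{\ell} \le m^\ell/\ell!$ together with the elementary bound $\ell! \ge (\ell/e)^\ell$, obtaining
\begin{equation*}
2\binom{m}{\ell}\big/\ell! \;\le\; \frac{2 m^\ell}{(\ell!)^2} \;\le\; 2\parend{\frac{m e^2}{\ell^2}}^{\!\ell}.
\end{equation*}
Substituting the hypothesis $m \le \parend{n/(48e)}^2$ gives $m e^2/\ell^2 \le \parend{n/(48\ell)}^2$, and the one inequality that actually has to be checked is $\ell = \floor{n/12} \ge n/24$: writing $n = 12q + r$ with integer $q \ge 1$ and $0 \le r \le 11$, we have $\ell = q$ and $24q = 12q + 12q \ge 12q + r = n$. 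Consequently $m e^2/\ell^2 \le (1/2)^2 = 1/4$, so the probability is at most $2 \cdot 4^{-\ell}$.

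Finally, since $\ell = \floor{n/12} > n/12 - 1$, we have $4^{-\ell} < 4 \cdot 4^{-n/12}$, and therefore the probability is at most $8 \cdot 4^{-n/12}$, as desired. I do not expect a genuine obstacle here: the computation is routine, and the only points needing care are the two degenerate regimes noted above and the bookkeeping of constants. The somewhat generous constants $48e$ and $8$ in the statement are precisely what make the crude estimates $\binom{m}{\ell}\le m^\ell/\ell!$ and $\ell!\ge(\ell/e)^\ell$, and the handling of the floor in $\floor{n/12}$, go through cleanly (a sharper analysis, reflecting the fact that a random length-$m$ permutation typically has longest increasing subsequence near $2\sqrt m$, would improve them). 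One could instead quote a known large-deviation bound for the longest increasing subsequence of a random permutation, but the self-contained union bound above is shorter and directly yields the explicit constants in the statement.
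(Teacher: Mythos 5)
Your proof is correct and follows essentially the same route as the paper's: a union bound over the $\binom{m}{\ell}$ position sets, each monotone with probability $2/\ell!$, followed by $\binom{m}{\ell}\le m^\ell/\ell!$, Stirling's bound $\ell!\ge(\ell/e)^\ell$, the hypothesis on $m$ to get the base $1/4$, and the floor estimate $\ell\ge n/12-1$ to absorb the final factor of $4$. The only cosmetic difference is that you dispose of the degenerate regimes via $\binom{m}{\ell}=0$ and the trivial bound for small $n$, whereas the paper notes that $n<24$ forces $m=0$.
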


\begin{proof}
The proof of Claim~\ref{cla: monotone permutations} can be derived from the
the argument given by Alan Frieze in~\cite{frieze_on-the-length_1991}. We include
it here for the sake of completeness.

Without loss of generality, we can assume that $n \ge 24$. Otherwise, $m=0$ and the claim is trivially true.

Let $\ell = \floor{\frac{n}{12}} \ge 2e\sqrt{m}$, where the inequality follows from the bound on $m$.
Let $\calL$ be the family of $\ell$-element subsets of $\braced{1,2,...,m}$.
If $L\in \calL$, say $L = \braced{a_1,a_2,...,a_\ell}$ where $a_1 < a_2 < ... < a_\ell$, we will say that
$L$ is \emph{monotone} in a permutation $\pi$ if $\pi(a_1) \pi(a_2) \ldots \pi(a_\ell)$ defines a monotone sequence.
Each monotone subsequence of length $\ell$ can be either increasing or decreasing, so
each $L\in\calL$ has probability ${2}/{\ell!}$ of being monotone.
Using these observations, the union bound, and the inequality
$\ell! \ge (\ell/e)^\ell$ (that follows from Stirling's formula), we obtain:
\begin{align*}
    \textstyle\Prob (\,\exists \;\textrm{monotone}\; L\in\calL \,)
	\;&\le\; \textstyle\sum_{L \in\calL }\Prob (\,L \;\textrm{is monotone}\,) \\
    \;&=\; \binom{m}{\ell}\cdot\frac{2}{\ell!} \\
    \;&=\; 2\cdot \frac {m \cdot (m-1)\cdot \ldots \cdot (m-\ell+1)} {(\ell!)^2} \\
    \;&\le\; 2\cdot \frac{m^{\ell}}{(\ell/e)^{2\ell}} \\
    \;&=\; 2 \cdot \left(\frac{me^2}{\ell^2}\right)^{\ell} \\
    \;&\le\; 2 \cdot \left(\frac{me^2}{\left(2e\sqrt{m}\right)^2}\right)^{\ell} \\
    \;&=\; \textstyle2\cdot 4^{-\ell} \\
    \;&\le\; \textstyle8 \cdot 4^{-n/12},
\end{align*}
since $\ell\ge n/12-1$. This completes the proof of Claim~\ref{cla: monotone permutations} and
Theorem~\ref{thm: main theorem}.
\end{proof}


\bibliography{references_universal_sets}

\end{document}